\documentclass[conference,a4paper]{IEEEtran}
\IEEEoverridecommandlockouts

\usepackage{amsmath,amssymb,amsthm,bbm,cite,color,comment,graphicx,microtype,subfigure,xcolor}
\usepackage[nolist]{acronym}
\usepackage[USenglish]{babel}
\usepackage[T1]{fontenc}
\usepackage[utf8]{inputenc}
\usepackage{algpseudocode}
\usepackage{float}
\usepackage[ruled,vlined]{algorithm2e}

\usepackage[figurename=Fig., labelsep=period, font=small]{caption}

\usepackage{titlesec}
\let\subparagraph\relax
\titlespacing{\section}{0pt}{6pt plus 2pt minus 1pt}{4pt plus 1pt minus 1pt} 
\titlespacing{\subsection}{0pt}{4pt plus 2pt minus 1pt}{2pt plus 1pt minus 1pt} 
\usepackage{tikz,pgfplots}
\usetikzlibrary{arrows,shapes.geometric}
\pgfplotsset{compat=newest}
\pgfplotsset{
/pgfplots/ybar legend/.style={
/pgfplots/legend image code/.code={%
\draw[##1,/tikz/.cd,bar width=0.1cm,yshift=-0.2em,bar shift=0.5*\pgfplotbarwidth]
plot coordinates {(0.5*\pgfplotbarwidth,0.6em) (2.5*\pgfplotbarwidth,0.4em) (4.5*\pgfplotbarwidth,0.2em)};},}}

\graphicspath{{figures/}}
\usepackage{textcomp}
\def\BibTeX{{\rm B\kern-.05em{\sc i\kern-.025em b}\kern-.08em
    T\kern-.1667em\lower.7ex\hbox{E}\kern-.125emX}}

\newcommand{\h}{\mathbf{h}}

\newcommand{\p}{\mathbf{p}}

\newcommand{\y}{\mathbf{y}}
\newcommand{\z}{\mathbf{z}}

\newcommand{\0}{\mathbf{0}}

\newcommand{\A}{\mathbf{A}}

\newcommand{\U}{\mathbf{U}}

\newcommand{\Deltab}{\mathbf{\Delta}}

\newcommand{\setE}{\mathcal{E}}

\newcommand{\Compl}{\mbox{$\mathbb{C}$}}
\newcommand{\Real}{\mbox{$\mathbb{R}$}}
\newcommand{\rmF}{\mathrm{F}}

\newcommand{\argmax}{\operatornamewithlimits{argmax}}

\newcommand{\Diag}{\mathrm{Diag}}

\newcommand{\herm}{\mathrm{H}}

\newcommand{\rank}{\mathrm{rank}}
\renewcommand{\Re}{\mathrm{Re}}

\newcommand{\tr}{\mathrm{tr}}
\newcommand{\tran}{\mathrm{T}}

\definecolor{oulu_blue}{HTML}{23408F}
\definecolor{oulu_green}{HTML}{39B54A}

\definecolor{red}{rgb}{1,0,0}
\definecolor{red_magenta}{rgb}{1,0,0.5}
\definecolor{magenta}{rgb}{1,0,1}
\definecolor{blue_magenta}{rgb}{0.5,0,1}
\definecolor{blue}{rgb}{0,0,1}
\definecolor{blue_cyan}{rgb}{0,0.5,1}
\definecolor{cyan}{rgb}{0,0.75,0.75} 
\definecolor{green_cyan}{rgb}{0,1,0.5}
\definecolor{green}{rgb}{0,1,0}
\definecolor{green_yellow}{rgb}{0.375,0.75,0} 
\definecolor{yellow}{rgb}{1,1,0}
\definecolor{red_yellow}{rgb}{1,0.5,0}

\begin{acronym}
\acro{2D-MUSIC}{two-dimensional multiple signal classification}
\acro{AWGN}{additive white Gaussian noise}
\acro{BS}{base station}
\acro{CSI}{channel state information}
\acro{DoF}{degree of freedom}
\acro{i.i.d.}{independent and identically distributed}
\acro{LoS}{line-of-sight}
\acro{LS}{least-squares}
\acro{MF}{matched filtering}
\acro{MIMO}{multiple-input multiple-output}
\acro{MUSIC}{multiple signal classification}
\acro{mmWave}{millimeter-wave}
\acro{NLoS}{non-line-of-sight}
\acro{PDF}{probability density function}
\acro{SE}{spectral efficiency}
\acro{SINR}{signal-to-interference-plus-noise ratio}
\acro{sub-THz}{sub-terahertz}
\acro{sum-SE}{sum spectral efficiency}
\acro{ULA}{uniform linear array}
\acro{UL}{uplink}
\acro{DL}{downlink}
\acro{SDR}{semidefinite relaxation}
\acro{ZF}{zero forcing}
\acroplural{LMI}[LMIs]{linear matrix inequalities}
\acro{LMI}{linear matrix inequality}
\acro{XL}{extremely large-scale}
\end{acronym}
\newtheorem{proposition}{Proposition}
\title{Robust Near-Field Beam Focusing \\ Under Imperfect Localization}

\author{
\IEEEauthorblockN{Nima Mozaffarikhosravi, Amirhossein Azarbahram, Prathapasinghe Dharmawansa, and Italo Atzeni}
\IEEEauthorblockA{Centre for Wireless Communications, University of Oulu, Finland \\
E-mail: \{nima.mozaffari, amirhossein.azarbahram, prathapasinghe.kaluwadevage, italo.atzeni\}@oulu.fi
\thanks{This work was supported by the Research Council of Finland (336449 Profi6, 348396 HIGH-6G, and 369116 6G~Flagship).}}
}

\begin{document}

\maketitle
\begin{abstract}
The transition to 6G-and-beyond wireless systems with large-scale antenna arrays and high-frequency deployments significantly extends the near-field region, where channels exhibit a strong dependence on user location. While this enables location-based beam focusing as a low-overhead alternative to conventional channel estimation, its performance is highly sensitive to localization errors. In this paper, we study robust near-field beam focusing under imperfect user localization. We explicitly characterize the impact of localization errors on the line-of-sight-dominated channel by deriving a tractable uncertainty model via a first-order Taylor approximation, which captures the coupled effects of distance and angle in near-field propagation. Building on this model, we formulate a max-min signal-to-interference-plus-noise ratio optimization problem that guarantees performance under worst-case channel realizations induced by bounded localization errors. The resulting problem is reformulated into a feasibility problem using semidefinite relaxation. Numerical results demonstrate that the proposed robust design significantly improves the worst-user rate compared to non-robust beam focusing, particularly under high total transmit power levels and large localization error ranges. 
\end{abstract}
\begin{IEEEkeywords}
Localization, near-field communications, robust beam focusing, XL-MIMO.
\end{IEEEkeywords}

\section{Introduction}

Recent advances in 6G-and-beyond systems, prompted by applications such as virtual reality and autonomous driving, have raised the demand for higher data rates, reliable communications, and balanced performance across users in wireless systems \cite{Nguyen2021_6G, Zhengquan2019_Application}. A promising technology to meet these requirements is massive \ac{MIMO}, which improves performance by providing additional beamforming and spatial gain \cite{Emil2017_mMIMO}. As wireless systems shift toward operating at higher frequency bands, antennas become physically smaller, which facilitates the deployment of \ac{XL}-\ac{MIMO} systems. When large arrays are employed at smaller wavelengths, the radiative near-field region, commonly characterized by the Fraunhofer distance, extends significantly and can reach hundreds of meters. In this regime, the channel explicitly depends on the location of the user through spherical wavefront propagation, making location information a useful means of \ac{CSI} acquisition with lower signaling overhead than conventional pilot-based methods \cite{Nima2025_WCL, Ram2025_loc}.

Near-field beamforming (or beam focusing) inherently leverages user location information to achieve highly directional and spatially selective transmissions. However, to effectively exploit this information, accurate user localization becomes critical. Although modern localization techniques can achieve high accuracy, even small estimation errors may result in significant near-field channel mismatches due to the strong sensitivity to distance and, in particular, angle variations at high frequencies \cite{Dinesh2021_poistion_cuncertatiny}. Therefore, instead of relying on near-perfect localization, it is more practical to design beam-focusing strategies that are inherently robust to location uncertainty. Such robust designs can maintain reliable performance without requiring extremely precise localization or exhaustive refinement of user locations, thereby reducing system complexity and signaling overhead.

Robust beamforming under imperfect \ac{CSI} has been extensively studied in the literature. For instance, \cite{Wang2009_WCRobust} investigates worst-case robust \ac{MIMO} transmission by modeling \ac{CSI} uncertainties within bounded sets and formulating \ac{SINR}-constrained optimization problems, where techniques such as \ac{SDR} enable globally optimal solutions under specific conditions. Moreover, \cite{Jafri2025_CellFreeRobustHBF} extends robust beamforming design to cooperative cell-free mmWave \ac{MIMO} networks, where hybrid beamformers are optimized under imperfect \ac{CSI} to mitigate the performance degradation caused by channel uncertainty, while \cite{Taotao2023_Robust_MaxMin} incorporates user location uncertainty into a max-min fairness formulation, demonstrating how spatial errors directly impact multi-user performance. More recently, robustness has been revisited in the context of near-field and \ac{XL}-\ac{MIMO} systems. Specifically, in \cite{wang2026_RobustNF}, a robust near-field \ac{CSI} estimation framework is proposed by combining model-driven and data-driven approaches, including dual-stage optimization and residual refinement, to improve estimation accuracy under complex propagation conditions. Similarly, \cite{Uchimura2026_DeterStoch} studies robust near-field beams that focus on localization errors by formulating ergodic sum-rate maximization problems and solving them through deterministic and stochastic optimization, highlighting the importance of robustness against location-induced \ac{CSI} inaccuracies.

Despite these advances, several important gaps remain underexplored. Although robustness against channel uncertainty is well established for far-field beamforming, existing works typically rely on channel models where the impact of uncertainty is primarily angular. In contrast, near-field channels depend jointly on distance and angle, making the relationship between localization errors and channel mismatch significantly more intricate. To the best of our knowledge, there is no prior work that directly formulates and solves a robust beam-focusing problem in near-field systems by explicitly starting from user location uncertainty and designing robust beamformers based on the channel for the estimated locations. 

In this work, we consider a \ac{DL} \ac{XL}-\ac{MIMO} system in the near-field regime, where imperfect user localization leads to channel uncertainty at the transmitter. To address this challenge, we explicitly characterize the impact of localization errors on the channel by deriving a tractable worst-case channel model through linearizing the near-field channel with respect to localization errors within a bounded uncertainty set. This formulation captures the coupling between distance and angle in near-field propagation and enables a structured representation of the resulting channel uncertainty. Here, we focus on \ac{LoS}-dominated propagation and an ellipsoidal uncertainty set, as they constitute a fundamental starting point for future extensions to multipath propagation scenarios and more practically accurate localization error models. Based on this formulation, we develop a beam-focusing design using \ac{SDR}, which guarantees performance under the worst-case channel realizations induced by location uncertainty and total transmit power constraints. The resulting optimization problem is reformulated in a convex form that can be efficiently solved. The proposed approach is evaluated through a max-min \ac{SINR} optimization framework, demonstrating its effectiveness in improving worst-user worst-case rate performance under varying transmit power levels and uncertainty set sizes, and highlighting its robustness against localization-induced channel mismatches.

\textit{\textbf{Structure:}} Section~\ref{sec:System_model} introduces the system model and the
problem formulation. The proposed robust beam-focusing design is discussed in Section~\ref{Beamforming Design}, while Section~\ref{Numerical_results} provides the numerical results, and Section~\ref{Conclusion} concludes the paper.

\textit{\textbf{Notations:}} Vectors and matrices are denoted by boldface lower and uppercase letters, respectively. The transpose and conjugate transpose operators are denoted by $(\cdot)^{\mathrm T}$ and $(\cdot)^{\herm}$, respectively. The identity matrix of dimension $n$ and the all-zero matrix of proper dimension are denoted by $\mathbf I_n$ and $\mathbf 0$, respectively, whereas $\Diag(\cdot)$ denotes a diagonal matrix formed from its arguments. $\mathbf A \succ \0$ ($\mathbf A \succeq \0$) denotes that $\mathbf A$ is Hermitian positive (semi)definite, whereas $\mathbf A^{\frac{1}{2}}$ is the positive definite square root of $\A$. The Euclidean vector norm is denoted by $\|\cdot\|_{2}$, the corresponding induced matrix norm by $\|\mathbf A\|_2 = \max_{\|\mathbf v\|_{2} \le 1} \|\mathbf A\mathbf v\|_2$, and $\|\cdot\|_{\rmF}$ denotes the Frobenius norm. The real part and magnitude of a complex scalar are denoted by $\Re\{\cdot\}$ and $|\cdot|$, respectively, while $j =\sqrt{-1}$ denotes the imaginary unit. The trace operator is denoted by $\tr(\cdot)$.

\section{System Model and Problem Formulation}\label{sec:System_model}

In this section, we introduce the considered system model, including the channel and location uncertainty models. Building on these elements, we formulate a robust max-min \ac{SINR} optimization problem that accounts for imperfect \ac{CSI} at the \ac{BS} caused by localization errors.

We consider a \ac{DL} \ac{XL}-\ac{MIMO} system where a \ac{BS} equipped with $M$ antennas simultaneously serves $K$ single-antenna users, as depicted in Fig.~\ref{fig:system_model}. Linear precoding is employed at the \ac{BS}, where $\U =[\mathbf u_{1}, \ldots,\mathbf u_{K}] \in \mathbb C^{M \times K}$ denotes the beamforming matrix and $\mathbf u_k \in \Compl^{M \times 1}$ is the beamforming vector dedicated to user~$k$. The transmit power is constrained as $\|\U\|_{\rmF}^{2} \le P_{\textrm{BS}}$, where $P_{\textrm{BS}}$ is the total transmit power budget. The \ac{DL} channel of user~$k$ is denoted by $\mathbf h_k \in \mathbb C^{M \times 1}$.
Accordingly, the instantaneous \ac{SINR} of user~$k$ is given by
\begin{align}
\gamma_k(\U) =
\frac{|\mathbf h_k^{\herm}\mathbf u_k|^2}
{\sum_{j\neq k}|\mathbf h_k^{\herm}\mathbf u_j|^2 + \sigma_k^2},
\label{eq:true_sinr}
\end{align}
where $\sigma_k^2$ is the \ac{AWGN} power.

We consider a two-dimensional Cartesian setting where the \ac{BS} is located at $(0, 0)$ and the location of user~$k$ is denoted by $\mathbf p_k = [x_k, y_k]^\tran \in \mathbb R^2$.

\begin{figure}[t]
\centering
\pgfdeclareimage{UE}{figures/fig_UE}

\begin{tikzpicture}[>=latex, scale=0.80, transform shape]

\small

\draw[->, thick, red] (-2.5,-1.25) -- (-2.5,-0.5) node[midway, sloped, above, xshift=-2pt] {\footnotesize $x$-axis};
\draw[->, thick, red] (-2.5,-1.25) -- (-1.75,-1.25) node[midway, sloped, below, xshift=-2pt] {\footnotesize $y$-axis};

\draw[very thick] (0,-1.5) -- (0,1.5) node[midway, xshift=-10pt] {BS};
\foreach \y in {-1.5,-1,...,1.5} {
    \node at (0,\y) {$\bullet$};
};

\node[draw, minimum width=1.4cm, minimum height=0.65cm, fill=gray!20, ellipse, rotate=45, transform shape] () at (2,{2*tan(45)}) {};

\node[draw, minimum width=1.45cm, minimum height=0.75cm, fill=gray!20, ellipse, rotate=-20, transform shape] () at (3,{3*tan(-20)}) {};

\node[draw, minimum width=1.75cm, minimum height=0.85cm, fill=gray!20, ellipse, rotate=15, transform shape] () at (4,{4*tan(15)}) {};

\draw[<->, black] (-0.25,1) -- (-0.25,1.5) node[midway, anchor=east] {$d$};
\draw[<->, black]  (-0.7,-1.5) -- (-0.7,1.5) node[midway, anchor=east] {$D$};

\draw[thin,dotted] (0,0) -- (2,{2*tan(45)});
\draw node[scale=0.015] (UE_1) at (2,{2*tan(45)}) {\pgfbox[center,center]{\pgfuseimage{UE}}};

\draw[thin,dotted] (0,0) -- (3,{3*tan(-20)});
\draw node[scale=0.015] (UE_2) at (3,{3*tan(-20)}) {\pgfbox[center,center]{\pgfuseimage{UE}}};

\draw[thin,dotted] (0,0) -- (4,{4*tan(15)});
\draw node[scale=0.015] (UE_3) at (4,{4*tan(15)}) {\pgfbox[center,center]{\pgfuseimage{UE}}};
\node[anchor=west,xshift=1mm] at (4,{4*tan(15)}) {$\hat{\p}_{k}$};
\node[anchor=west,xshift=1mm,yshift=7mm] at (4,{4*tan(15)}) {$\setE_{k}$};

\end{tikzpicture}
\caption{Considered system model with ellipsoidal location uncertainty sets.}
\label{fig:system_model}
\end{figure}

\subsection{Near-Field Channel Model}

We adopt a near-field propagation model where each channel depends explicitly on the user location. As an initial step towards understanding the impact of imperfect user localization, we focus on purely \ac{LoS} propagation, which is representative of many high-frequency scenarios \cite{Italo2025_THz} and allows us to isolate the effect of location uncertainty on the channel; extensions to multipath propagation will be considered in future work. The \ac{BS} is equipped with a \ac{ULA} with antenna spacing $d$ and total length $D = (M-1)d$. Let $f_\textrm{c}$ and $\lambda_\textrm{c} = \frac{c}{f_\textrm{c}}$ denote the carrier frequency and the corresponding wavelength, respectively, where $c$ is the speed of light in vacuum.

We consider a common pathloss factor across all antennas and model the channel between the \ac{BS} and a generic location $\mathbf p =[x,y]^{\tran} \in \Real^{2}$ as \cite{Nima2025_WCL}
\begin{align} \label{channel}
\h^{\textrm{LoS}}(\p)
=
\frac{\lambda_\textrm{c}}{4\pi \|\mathbf p\|_2}
e^{-j\frac{2\pi}{\lambda_\textrm{c}} \|\mathbf p\|_2}
\mathbf b(\mathbf p) \in \mathbb{C}^{M \times 1},
\end{align}
where $\mathbf b(\mathbf p) = [b_0(\mathbf p), \ldots, b_{M-1}(\mathbf p)]^\tran \in \mathbb{C}^{M \times 1}$ denotes the near-field steering vector for the \ac{ULA}. The $m$th element of $\mathbf b(\mathbf p)$ is given by
\begin{align}\label{eq:b_def_cartesian}
b_m(\mathbf p)
=
e^{
-j\frac{2\pi}{\lambda_\textrm{c}} r_m
},
\end{align}
where
\begin{align}
r_m
=
\sqrt{(x - \delta_m d)^2 + y^2}
\label{eq:rn_exact_cartesian}
\end{align}
is the distance between $\mathbf p$ and the $m$th antenna at the \ac{BS}, with $\delta_m = \frac{2m-M+1}{2}$, $\forall m = 0, \ldots, M-1$. Leveraging the second-order Fresnel approximation, which is accurate in the radiating near-field \cite{Cui2024_fresnelchannel}, \eqref{eq:rn_exact_cartesian} can be approximated as
\begin{align}
r_m
\approx
\|\mathbf p\|_2
-
\frac{x}{\|\mathbf p\|_2} \delta_m d
+
\frac{y^2}{2 \|\mathbf p\|_2^3} \delta_m^2 d^2.
\label{eq:fresnel_second_cartesian}
\end{align}
For notational simplicity, the true channel of user~$k$ located at the true location $\mathbf p_k$ and the estimated channel of user~$k$ based on the estimated location $\hat{\mathbf p}_k$ are denoted by $\mathbf h_k = \mathbf h^{\textrm{LoS}}(\mathbf p_k)$ and $\hat{\mathbf h}_k = \mathbf h^{\textrm{LoS}}(\hat{\mathbf p}_k)$, respectively.

\subsection{Location Uncertainty Model}

Due to imperfect user localization, the \ac{BS} has access only to an estimated location
$\hat{\mathbf p}_k \in \Real^{2}$ for each user~$k$, expressed as
\begin{align}\label{Estimated_location}
\hat{\mathbf p}_k =\mathbf p_k + \mathbf e_k \in \mathbb R^2,
\end{align}
where $\mathbf e_k
= [e_{k,x}, e_{k,y}]^\tran
\in \mathbb R^2$ is the localization error vector capturing the uncertainty in the $x$- and $y$-coordinates. The estimated locations may be obtained at the BS through uplink localization or sensing, or reported by the users.

We model imperfect localization using a bounded error model in which the true location of each user lies within a compact region around its estimate~\cite{Taotao2023_Robust_MaxMin}. In particular, we use ellipsoidal uncertainty sets to capture the different localization accuracies in the radial and angular directions, as shown in Fig.~\ref{fig:system_model}. Accordingly, the localization error vector corresponding to user~$k$ belongs to the location uncertainty set
\begin{align}
\mathcal E_k
=
\{
\mathbf e_k \in \mathbb R^2 :
\mathbf e_k^\tran \mathbf Q_k^{-1} \mathbf e_k \le 1
\}, \ \mathbf Q_k \succ \mathbf \0,
\label{eq:ellipsoidal_cartesian}
\end{align}
with size, shape, and orientation determined by $\mathbf Q_k \in \Real^{2 \times 2}$. The bounded ellipsoidal model adopted here provides a tractable representation of the location uncertainty set, making it particularly suitable for robust optimization. The stochastic modeling of localization errors, which may provide a more realistic but less tractable characterization, will be investigated in future work.

\subsection{Problem Formulation} \label{beamfocusing}

Assuming that the \ac{BS} has access only to the estimated user locations in \eqref{Estimated_location}, which results in imperfect \ac{CSI}, our goal is to design the transmit beam-focusing vectors at the \ac{BS} to maximize the worst-case \ac{SINR} among all users. Since the near-field \ac{LoS} channel is determined by the user location through the steering vector, localization errors translate into structured perturbations of the channel. Specifically, substituting \eqref{Estimated_location} into the near-field channel model in \eqref{channel} shows that the induced channel perturbation is generally a nonlinear function of the localization error due to the distance-dependent pathloss and spherical-wave phase terms. Since the location uncertainty enters the near-field channel model in a nonlinear manner, directly treating it in the location domain is analytically challenging. To obtain a more tractable formulation, we instead describe the uncertainty in the channel domain by expressing the estimated channel as \cite{Wang2009_WCRobust}
\begin{align}
\hat{\mathbf h}_k = \mathbf h_k + \mathbf z_k \in \mathbb C^{M \times 1},
\label{eq:channel_uncertainty}
\end{align}
where $\mathbf z_k \in \mathbb C^{M \times 1}$ represents the channel estimation error induced by imperfect user localization. Accordingly, robustness is ensured for each user by considering all the channels that are consistent with the available estimated user location, i.e., all possible channels induced by the user's location uncertainty set. The resulting channel uncertainty set of user~$k$, centered at the estimated channel $\hat{\h}_{k}$, is given by
\begin{align} \label{channel_uncertainty_set}
\mathcal H_k
=
\{
\mathbf h_k \in \Compl^{M \times 1} :
\mathbf h_k = \hat{\mathbf h}_k - \mathbf z_k, \
\mathbf z_k \in \mathcal Z_k
\},
\end{align}
where the set $\mathcal Z_k$ is implicitly induced by the location uncertainty set $\mathcal E_k$ defined in \eqref{eq:ellipsoidal_cartesian} through the nonlinear near-field channel mapping. Despite the convexity of $\mathcal E_k$, the resulting channel uncertainty set in \eqref{channel_uncertainty_set} is nonconvex. Therefore, in Section~\ref{Reformulation}, we derive a tractable approximation that yields an affine uncertainty representation in the channel domain, enabling a convex robust reformulation.

Finally, the considered robust max-min \ac{SINR} optimization problem, which maximizes the worst-user worst-case \ac{SINR}, is formulated as 
\begin{align}
\begin{array}{cl}
\displaystyle \underset{\U}{\mathrm{maximize}} 
& \displaystyle \min_{k = 1,\ldots,K} \; \min_{\mathbf h_k \in \mathcal H_k} \; \gamma_k(\U) \\
\mathrm{s.t.} 
& \displaystyle \|\U\|_{\rmF}^{2} \le P_{\textrm{BS}}.
\end{array}
\label{eq:maxmin_robust}
\end{align}

\section{Robust Beam-focusing Design}\label{Beamforming Design}

In this section, we develop an efficient solution method for the robust max-min \ac{SINR} optimization problem in \eqref{eq:maxmin_robust}. Specifically, we first introduce a tractable channel uncertainty model and then apply \ac{SDR} to the resulting approximate problem.

\subsection{Tractable Channel Uncertainty Model}\label{Reformulation}

Here, we develop a tractable approximation of the estimated channel in \eqref{eq:channel_uncertainty}. The key difficulty is that the channel uncertainty set $\mathcal H_k$ is not given explicitly, but arises implicitly from errors in the estimated location of user~$k$. As a result, the channels and corresponding \ac{SINR} depend nonlinearly on the unknown user locations. To overcome this challenge, we proceed in two steps. First, we establish an explicit approximate relationship between localization errors and channel estimation errors by using a first-order Taylor expansion of \eqref{channel}. Second, since the resulting sensitivity matrix still depends on the underlying user location, we characterize a worst-case channel perturbation over the ellipsoidal location uncertainty set and use it to define a conservative, location-independent uncertainty model for robust beam-focusing design.

We begin by linearizing the channel around the true user location $\mathbf p_k = [x_k,y_k]^\tran$, corresponding to $\mathbf e_k = \mathbf 0$. Based on \eqref{channel}, a first-order Taylor expansion of the channel with respect to the generic location $\mathbf p = [x,y]^\tran$ yields
\begin{align}\label{taylor_expansion}
\h^{\textrm{LoS}}(\p)
\approx
\mathbf h_k
+
\big.
\frac{\partial \h^{\textrm{LoS}}(\p)}{\partial x}
(x - x_k)
+
\big.
\frac{\partial \h^{\textrm{LoS}}(\p)}{\partial y}
(y - y_k). 
\end{align}
Defining $r = \|\mathbf p\|_2 = \sqrt{x^2 + y^2}$, the partial derivative of the channel with respect to $x$ is given by
\begin{align}\label{Partial_x}
\mathbf g_{x}(\p)
& = \big.
\frac{\partial \h^{\textrm{LoS}}(\p)}{\partial x} \\
\nonumber & =
\frac{\lambda_\textrm{c}}{4\pi r}
e^{-j \frac{2\pi}{\lambda_\textrm{c}} r}
\bigg(
\bigg(-\frac{1}{r}
- j \frac{2\pi}{\lambda_\textrm{c}}\bigg)
\frac{x}{r}\mathbf I_M
\\
& \phantom{=} \ +
j \frac{2\pi}{\lambda_\textrm{c}}
\bigg(
d\frac{y^2}{r^3}
\mathbf \Deltab
+
d^2\frac{3 x y^2}{2 r^5}
\mathbf \Deltab^2
\bigg)
\bigg)
\mathbf b(\mathbf p),
\label{eq:dh_dx}
\end{align}
with
$\mathbf \Deltab
=
\mathrm{Diag}(\delta_0,\ldots,\delta_{M-1})  \in \mathbb R^{M \times M}$. Similarly, the partial derivative with respect to $y$ is given by
\begin{align}\label{Partial_y}
\mathbf g_{y}(\p)
& = \big.
\frac{\partial \h^{\textrm{LoS}}(\p)}{\partial y} \\
\nonumber & =
\frac{\lambda_\textrm{c}}{4\pi r}
e^{-j \frac{2\pi}{\lambda_\textrm{c}} r}
\bigg(
\bigg(-\frac{1}{r}
- j \frac{2\pi}{\lambda_\textrm{c}}\bigg)
\frac{y}{r}\mathbf I_M
\\
& \phantom{=} \ -
j \frac{2\pi}{\lambda_\textrm{c}}
\bigg(
d\frac{x y}{r^3}
\mathbf \Deltab
+
d^2\bigg(
\frac{y}{r^3}
-
\frac{3 y^3}{2 r^5}
\bigg)
\mathbf \Deltab^2
\bigg)
\bigg)
\mathbf b(\mathbf p).
\label{eq:dh_dy}
\end{align}
Then, \eqref{eq:dh_dx} and \eqref{eq:dh_dy} are used to construct the sensitivity matrix
\begin{align}\label{sensetivity_matrix}
\mathbf G_k(\mathbf p)
=
\big[\mathbf g_{x}(\p), \mathbf g_{y}(\p)\big]
\in \mathbb{C}^{M\times 2}.
\end{align}
Recalling the relation $\hat{\mathbf{p}}_k  - \mathbf p_k  = \mathbf e_{k}$, we substitute $\hat{\mathbf p}_k$ into \eqref{sensetivity_matrix} to approximate the estimated channel in \eqref{eq:channel_uncertainty} as
\begin{align}
\hat{\mathbf h}_k
\approx
\mathbf h_k
+
\underbrace{\mathbf G_k(\hat{\mathbf p}_k) \mathbf e_k}_{\approx \z_{k}},
\label{eq:first_order_matrix_cart}
\end{align}
which establishes an approximate affine mapping from the localization error $\mathbf e_k$ to the channel perturbation $\z_{k}$. The neglected Taylor remainder is of order $\mathcal O(\|\mathbf e_k\|_2^2)$, making the approximation accurate for small bounded localization errors.

Although \eqref{eq:first_order_matrix_cart} provides an affine perturbation model, the sensitivity matrix still depends on the estimated user location $\hat{\mathbf p}_k$, which yields an approximate perturbation model that depends on the true user location $\p_k$. This dependence prevents a direct characterization of the channel uncertainty set \eqref{channel_uncertainty_set} in a form that is independent of the underlying user location, which is required to express the objective in \eqref{eq:maxmin_robust} solely in terms of the channels and enable a tractable worst-case reformulation. To address this issue, we next characterize the worst-case channel perturbation induced by the ellipsoidal location uncertainty set in \eqref{eq:ellipsoidal_cartesian} and use it to identify a conservative evaluation point for the sensitivity matrix.

\begin{proposition}\label{prop:maxEstimationError}
Under the first-order approximation in \eqref{eq:first_order_matrix_cart}, the worst-case magnitude of the channel estimation error of user~$k$ over the ellipsoidal location uncertainty set in \eqref{eq:ellipsoidal_cartesian} is given by\footnote{Note that the left- and right-hand sides of \eqref{eq:maxEstimationError} are expressed in terms of the Euclidean vector norm and the corresponding induced matrix norm, respectively.}
\begin{align}
\max_{\mathbf e_k \in \mathcal E_k}
\big\|
\mathbf G_k(\mathbf p)\mathbf e_k
\big\|_2
=
\big\|
\mathbf G_k(\mathbf p)\mathbf Q_k^{\frac{1}{2}}
\big\|_2.
\label{eq:maxEstimationError}
\end{align}
\end{proposition}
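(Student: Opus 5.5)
The plan is a change of variables that maps the ellipsoid $\mathcal E_k$ onto the closed unit Euclidean ball, followed by the definition of the induced matrix norm. Since $\mathbf Q_k \succ \0$ is real symmetric, it has a unique symmetric positive definite square root $\mathbf Q_k^{\frac{1}{2}}$, which is invertible with $(\mathbf Q_k^{\frac{1}{2}})^{-1}(\mathbf Q_k^{\frac{1}{2}})^{-1} = \mathbf Q_k^{-1}$. Define $\mathbf v = \mathbf Q_k^{-\frac{1}{2}}\mathbf e_k$, so that $\mathbf e_k = \mathbf Q_k^{\frac{1}{2}}\mathbf v$. Then
\begin{align*}
\mathbf e_k^\tran \mathbf Q_k^{-1}\mathbf e_k = \mathbf v^\tran \mathbf v = \|\mathbf v\|_2^2 .
\end{align*}
Because the map $\mathbf e_k \mapsto \mathbf v$ is a bijection of $\Real^2$, it carries $\mathcal E_k$ exactly onto the closed real unit ball $\{\mathbf v \in \Real^2 : \|\mathbf v\|_2 \le 1\}$.

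Substituting gives
\begin{align*}
\max_{\mathbf e_k \in \mathcal E_k}\|\mathbf G_k(\mathbf p)\mathbf e_k\|_2 = \max_{\|\mathbf v\|_2\le 1,\ \mathbf v\in\Real^2}\|\mathbf G_k(\mathbf p)\mathbf Q_k^{\frac{1}{2}}\mathbf v\|_2 .
\end{align*}
The maximum is attained because we are maximizing a continuous function over a compact set. The right-hand side has the form of the induced norm $\|\mathbf A\|_2$ with $\mathbf A = \mathbf G_k(\mathbf p)\mathbf Q_k^{\frac{1}{2}} \in \Compl^{M\times 2}$. The one subtlety, and the main obstacle, is that the notation section defines $\|\mathbf A\|_2$ as a maximum over all vectors $\mathbf v$, implicitly complex, whereas here $\mathbf v$ must be real because localization errors are real.

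To close this gap, I will show that for a complex matrix $\mathbf A$ with only two columns, the maximum of $\|\mathbf A\mathbf v\|_2$ over real unit vectors may in general be strictly smaller than the complex one. Write $\|\mathbf A\mathbf v\|_2^2 = \mathbf v^\tran \Re\{\mathbf A^\herm\mathbf A\}\mathbf v$ for real $\mathbf v$. The real maximum is therefore $\lambda_{\max}(\Re\{\mathbf A^\herm\mathbf A\})^{1/2}$, while the complex maximum is $\lambda_{\max}(\mathbf A^\herm\mathbf A)^{1/2} = \|\mathbf A\|_2$. Two conclusions follow:
\begin{itemize}
\item The left-hand side of \eqref{eq:maxEstimationError} is always at most $\|\mathbf G_k(\mathbf p)\mathbf Q_k^{\frac{1}{2}}\|_2$, since the real unit ball is contained in the complex one.
\item Equality holds exactly when a leading eigenvector of $\mathbf A^\herm\mathbf A$ can be chosen real, for instance when $\mathbf A^\herm\mathbf A$ is real.
\end{itemize}

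I would therefore present the result with $\|\cdot\|_2$ read as the induced norm over the real domain $\Real^2$ of $\mathbf e_k$, which matches the footnote's intent. In that reading, equality follows immediately from the bijection onto the unit ball together with the definition of the norm. I would add a remark that, under the complex reading, the right-hand side is a valid (conservative) upper bound. That upper bound is all the subsequent robust design needs, since it is used only to build a conservative uncertainty set.
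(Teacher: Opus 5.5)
Your argument is correct and is the paper's proof: substitute $\mathbf e_k = \mathbf Q_k^{\frac{1}{2}}\mathbf v_k$ to map $\mathcal E_k$ onto the unit ball, then invoke the definition of the induced $2$-norm. Your real-versus-complex caveat is a genuine refinement that the paper passes over. Equality holds for the norm induced over $\Real^2$, whereas for the usual spectral norm only ``$\le$'' holds in general. In this model the inequality is in fact strict whenever $xy \neq 0$, since $\mathrm{Im}\{\mathbf g_x(\mathbf p)^{\herm}\mathbf g_y(\mathbf p)\} \propto x y d^2 \sum_m \delta_m^2 \neq 0$.
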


\begin{proof}
Given $\mathbf Q_k \succ \mathbf 0$, we express the localization error vector as $\mathbf e_k = \mathbf Q_k^{\frac{1}{2}}\mathbf v_k$ such that
\begin{align}
\mathbf e_k^{\tran}\mathbf Q_k^{-1}\mathbf e_k
=
\mathbf v_k^{\tran}\mathbf v_k
=
\|\mathbf v_k\|_2^2.
\end{align}
Hence, the constraint
$\mathbf e_k^{\tran}\mathbf Q_k^{-1}\mathbf e_k \le 1$
is equivalent to
$\|\mathbf v_k\|_2 \le 1$, which yields
\begin{align}
\max_{\mathbf e_k \in \mathcal E_k}
\|\mathbf G_k(\mathbf p)\mathbf e_k\|_2
&=
\max_{\|\mathbf v_k\|_2 \le 1}
\|\mathbf G_k(\mathbf p)\mathbf Q_k^{\frac{1}{2}}\mathbf v_k\|_2.
\end{align}
Finally, \eqref{eq:maxEstimationError} follows directly from the definition of the induced matrix $2$-norm.
\end{proof}
Building on Proposition~\ref{prop:maxEstimationError}, we define a conservative evaluation point for the sensitivity matrix by selecting the location within the location uncertainty set that maximizes the magnitude of the channel perturbation in \eqref{eq:first_order_matrix_cart}. Such worst-case location for user~$k$ is defined as
\begin{align}
\mathbf p_k^{\textrm{wc}}
=
\underset{\p \, : \, (\hat{\mathbf p}_k - \mathbf p) \in \mathcal E_k}{\argmax}
\big\|
\mathbf G_k(\mathbf p)\mathbf Q_k^{\frac{1}{2}}
\big\|_2,
\end{align}
and evaluating the sensitivity matrix at $\mathbf p_k^{\textrm{wc}}$ yields the location-independent, worst-case sensitivity matrix $\mathbf G_k^{\textrm{wc}} = \mathbf G_k(\mathbf p_k^{\textrm{wc}})$. Consequently, the channel estimation error in \eqref{eq:first_order_matrix_cart} is replaced by $\mathbf G_k^{\textrm{wc}} \mathbf e_k$, resulting in a worst-case additive uncertainty model that depends only on the bounded error vector $\mathbf e_k$. Then, the corresponding worst-case channel uncertainty set is defined as
\begin{align}
\mathcal H_k^{\textrm{wc}}
=
\{
\mathbf h_k \in \Compl^{M \times 1}
:
\mathbf h_k
=
\hat{\mathbf h}_k
-
\mathbf G_k^{\textrm{wc}}\mathbf e_k,
\
\mathbf e_k^{\tran}\mathbf Q_k^{-1}\mathbf e_k \le 1
\}.
\label{eq:worst_case_channel_uncertainty_set}
\end{align}
In the following, we replace the channel uncertainty set in \eqref{channel_uncertainty_set} with its worst-case counterpart in \eqref{eq:worst_case_channel_uncertainty_set}.

\subsection{SDR-Based Reformulation}\label{SDR-Based Robust Beam focusing}

Consider the robust max-min \ac{SINR} optimization problem in \eqref{eq:maxmin_robust} with $\mathcal H_k$ replaced by $\mathcal H_k^{\textrm{wc}}$ in \eqref{eq:worst_case_channel_uncertainty_set}, $\forall k = 1, \ldots, K$. This can be rewritten in epigraph form as
\begin{subequations}\label{eq:epigraph_robust}
\begin{align}
\hspace{-2mm} \underset{\{\U\}, \gamma}{\mathrm{maximize}} 
& \ \ \gamma \label{eq:epigraph_robust_a}\\
\mathrm{s.t.} \ \ \ \
& \ \ \gamma_k(\U) \ge \gamma, \ \forall \mathbf h_k \in \mathcal H_k^{\textrm{wc}},\ \forall k = 1, \ldots, K,
\label{eq:epigraph_robust_b}\\
& \ \ \displaystyle \|\U\|_{\rmF}^{2} \le P_{\textrm{BS}}.
\label{eq:epigraph_robust_c}
\end{align}
\end{subequations}
This problem is quasi-convex in $\gamma$, since for any fixed $\gamma$ the \ac{SINR} constraints in \eqref{eq:epigraph_robust_b} define a convex feasibility set. Therefore, the optimal value of $\gamma$ can be efficiently obtained via a bisection search, where at each iteration a convex feasibility problem is solved \cite[Ch.~4.2.5]{boyd2004_convex}.

Let us introduce $\mathbf W_k = \mathbf u_k \mathbf u_k^{\herm} \in \mathbb{C}^{M\times M}$ as the rank-one beam-focusing matrix of user~$k$ and rewrite the total transmit power constraint in \eqref{eq:epigraph_robust_c} as
\begin{align}
\sum_{k=1}^{K} \tr(\mathbf W_k)
\le
P_{\textrm{BS}}, \ \mathbf W_k \succeq \0.
\end{align}
Furthermore, for a given $\gamma \ge 0$, let us define $\mathbf S_k(\gamma)
=
\mathbf W_k
-
\gamma \sum_{j \neq k} \mathbf W_j \in \mathbb{C}^{M\times M}$ and express the \ac{SINR} constraints in \eqref{eq:epigraph_robust_b} as
\begin{align}
\mathbf h_k^{\herm}
\mathbf S_k(\gamma)
\mathbf h_k
\ge
\gamma \sigma_k^2, \ \forall \mathbf h_k \in \mathcal H_k^{\textrm{wc}},\ \forall k = 1, \ldots, K.
\label{eq:robust_quadratic}
\end{align}
Using \eqref{eq:worst_case_channel_uncertainty_set} and substituting the corresponding channels into \eqref{eq:robust_quadratic} allows explicit characterization of the impact of imperfect localization. Since $\mathbf S_k(\gamma)$ is Hermitian, expanding
$
(\hat{\mathbf h}_k-\mathbf G_k^{\textrm{wc}}\mathbf e_k)^{\herm}
\mathbf S_k(\gamma)
(\hat{\mathbf h}_k-\mathbf G_k^{\textrm{wc}}\mathbf e_k)
$
and collecting the terms with respect to the localization error vector
$\mathbf e_k$ yields a quadratic function of $\mathbf e_k$.
Consequently, \eqref{eq:robust_quadratic} can be rewritten as
\begin{align}
\mathbf e_k^{\tran} \mathbf A_k(\gamma) \mathbf e_k
+
2 \Re \big\{ \boldsymbol{\kappa}_k(\gamma)^{\herm} & \mathbf e_k \big\}
+
\chi_k(\gamma)
\ge 0, \nonumber \\ & \forall \mathbf e_k \in \mathcal E_k, \ \forall k = 1, \ldots, K,
\label{eq:quad_expanded}
\end{align}
where we have defined
\begin{align}
\mathbf A_k(\gamma)
&=
(\mathbf G_k^{\textrm{wc}})^{\herm}
\mathbf S_k(\gamma)
\mathbf G_k^{\textrm{wc}}
\in \mathbb C^{2\times 2}, \\
\boldsymbol{\kappa}_k(\gamma)
&=
-
(\mathbf G_k^{\textrm{wc}})^{\herm}
\mathbf S_k(\gamma)
\hat{\mathbf h}_k
\in \mathbb C^{2 \times 1}, \\
\chi_k(\gamma)
&=
\hat{\mathbf h}_k^{\herm}
\mathbf S_k(\gamma)
\hat{\mathbf h}_k
-
\gamma \sigma_k^2
\in \mathbb R .
\end{align}

\begin{algorithm}[t]
\small
\caption{Proposed robust beam-focusing design via \ac{SDR} and bisection}
\label{alg:robust_beamforming}
\LinesNumbered

\KwIn{Estimated user locations $\{\hat{\mathbf p}_k\}_{k=1}^K$,
uncertainty matrices $\{\mathbf Q_k\}_{k=1}^K$,
noise powers $\{\sigma_k^2\}_{k=1}^K$,
$P_{\textrm{BS}}$, bisection tolerance $\epsilon$,
initial SINR value $\gamma_0$, growth factor $\rho>1$,
and upper limit $\gamma_{\textrm{max}}$.}

\KwOut{Beam-focusing vectors $\{\mathbf u_k\}_{k=1}^K$.}

\For{$k=1,\ldots,K$}{
Compute estimated channel $\hat{\mathbf h}_k$ using \eqref{channel}\;
Compute sensitivity matrix $\mathbf G_k(\mathbf p)$ using
\eqref{eq:dh_dx}--\eqref{eq:dh_dy}\;
Determine worst-case sensitivity $\mathbf G_k^{\textrm{wc}}$\;
}

\tcp{Determine the initail \ac{SINR} bracket}
Set $\gamma_{\textrm{min}}\gets 0$ and $\gamma\gets\gamma_0$\;

\While{$\gamma\leq\gamma_{\textnormal{max}}$}{
Solve feasibility problem~\eqref{eq:feasibility_problem} for $\gamma$\;

\eIf{problem~\eqref{eq:feasibility_problem} is feasible}{
    Set $\gamma_{\textrm{min}}\gets\gamma$ and store
    $\{\mathbf W_k^\star\}$\;
    Set $\gamma\gets\min\{\rho\gamma,\gamma_{\textrm{max}}\}$\;
}{
    Set $\gamma_{\textrm{max}}\gets\gamma$\;
    \textbf{break}\;
}
}

\tcp{Refine the bracket by bisection}
\While{$\gamma_{\textnormal{max}}-\gamma_{\textnormal{min}}>\epsilon$}{
Set $\gamma\gets
(\gamma_{\textrm{min}}+\gamma_{\textrm{max}})/2$\;

Solve feasibility problem~\eqref{eq:feasibility_problem}\;

\eIf{problem~\eqref{eq:feasibility_problem} is feasible}{
    Set $\gamma_{\textrm{min}}\gets\gamma$ and store
    $\{\mathbf W_k^\star\}$\;
}{
    Set $\gamma_{\textrm{max}}\gets\gamma$\;
}
}

\For{$k=1,\ldots,K$}{
Recover beam-focusing vectors from $\mathbf W_k^\star$
via Gaussian randomization\;
}

\Return{$\{\mathbf u_k\}_{k=1}^K$}\;
\end{algorithm}

Now, since $\mathbf e_k$ lies in an ellipsoidal set, \eqref{eq:quad_expanded} can be converted into a set of finite-dimensional linear matrix inequalities via the S-procedure~\cite{song2012_robust}.
Specifically, a sufficient condition is the existence of $\{\nu_k \ge 0\}_{k=1}^{K}$ satisfying
\begin{align}
\begin{bmatrix}
\mathbf A_k(\gamma) + \nu_k \mathbf Q_k^{-1}
&
\boldsymbol{\kappa}_k(\gamma)
\\
\boldsymbol{\kappa}_k(\gamma)^{\herm}
&
\chi_k(\gamma) - \nu_k
\end{bmatrix}
\succeq \0, \ \forall k = 1, \ldots, K.
\label{eq:lmi_constraint}
\end{align}
As \eqref{eq:lmi_constraint} is parameterized by the target \ac{SINR} level $\gamma$, for a fixed $\gamma$, the max-min \ac{SINR} maximization problem can be solved through a feasibility check over the rank-one beam-focusing matrices. Specifically, for a fixed $\gamma$, the feasibility problem is formulated as
\begin{align}
\begin{array}{cl}
\mathrm{find}
& \{\mathbf W_k \succeq \0\}_{k=1}^{K},\ \{\nu_k \ge 0\}_{k=1}^{K} \\[3pt]
\mathrm{s.t.}
&
\begin{bmatrix}
\mathbf A_k(\gamma) + \nu_k \mathbf Q_k^{-1}
&
\! \! \! \! \boldsymbol{\kappa}_k(\gamma)
\\
\boldsymbol{\kappa}_k(\gamma)^{\herm}
&
\! \! \! \! \chi_k(\gamma) - \nu_k
\end{bmatrix}
\succeq \0, \ \forall k = 1, \ldots, K,
\\[9pt]
&
\displaystyle
\sum_{k=1}^{K}\tr(\mathbf W_k)
\le P_{\textrm{BS}},
\\[8pt]
&
\rank(\mathbf W_k)=1, \ \forall k = 1, \ldots, K,
\end{array}
\label{eq:feasibility_problem}
\end{align}
where the rank-one constraints make the problem nonconvex.
By applying \ac{SDR}, we drop these rank-one constraints, and the resulting problem becomes a semidefinite program that can be solved efficiently by first determining a feasible--infeasible interval for $\gamma$ and then applying bisection within this interval to determine the optimal target \ac{SINR} $\gamma$. Since the relaxed feasibility problem is convex for any fixed $\gamma$, bisection converges to the globally optimal value of the relaxed quasiconvex problem up to the chosen tolerance $\epsilon$~\cite[Ch.~4.2.5]{boyd2004_convex}. After solving the relaxed problem, the beam-focusing vectors $\{\mathbf u_k\}_{k=1}^K$ are recovered from $\{\mathbf W_k\}_{k=1}^K$ using Gaussian randomization. In our simulations, however, the obtained matrices are observed to be nearly rank-one, and the beam-focusing vectors can therefore be directly extracted from the principal eigenvectors (scaled by the square root of their corresponding eigenvalues) with negligible performance loss. The proposed robust beam-focusing design is summarized in Algorithm~\ref{alg:robust_beamforming}.

\section{Numerical Results}\label{Numerical_results}

In this section, we evaluate the performance of the proposed robust beam-focusing design, examining the impact of both the transmit power budget $P_{\textrm{BS}}$ and the level of location uncertainty. Since the proposed optimization maximizes the worst-user worst-case \ac{SINR}, the reported performance is expressed in terms of the corresponding worst-user worst-case rate obtained from the optimized \ac{SINR}. We adopt as the primary baseline the non-robust counterpart of \eqref{eq:feasibility_problem}, obtained by treating the estimated channels $\hat{\mathbf h}_k$ as perfect, since existing robust beam-focusing methods such as \cite{Uchimura2026_DeterStoch} employ different objectives and uncertainty models and therefore do not permit a direct comparison under the same system model. We consider a \ac{BS} equipped with $M=128$ antennas spaced by $d=\frac{\lambda_{\textrm{c}}}{2}$, operating at carrier frequency $f_{\textrm{c}}=16$~GHz (within the upper mid-band) and serving $K=3$ users. The users are located in the near-field region, with radial distances uniformly distributed in $r \in \left[\frac{R_{\textrm{F}}}{6}, \frac{R_{\textrm{F}}}{1.5}\right]$ and angular locations in $\theta \in [-\frac{\pi}{2}, \frac{\pi}{2}]$, where $R_{\textrm{F}}=\frac{2D^{2}}{\lambda_{\textrm{c}}}$ denotes the Fraunhofer distance. A minimum separation constraint $\|\mathbf p_i - \mathbf p_j\|_2 \geq \frac{R_{\textrm{F}}}{6}$ is enforced between any pair of users $i \ne j$. To vary the transmit power using a single normalized parameter, we set the \ac{AWGN} power to $\sigma^2 = 1$ and normalize the channel vectors and their sensitivities with respect to the root-mean-square channel norm across users. The value of $P_{\textrm{BS}}$ is then used in the optimization through the total transmit power constraint, and the resulting beam-focusing vectors implicitly determine how the available power is distributed among the users. The location uncertainty set of each user is modeled as an oriented ellipsoidal set. Specifically, the uncertainty covariance matrix is defined as
\begin{align}
\mathbf{Q}_k = \mathbf{R}_k 
\begin{bmatrix}
\xi_{\textrm{major}}^2 & 0 \\
0 & \xi_{\textrm{minor}}^2
\end{bmatrix}
\mathbf{R}_k^\tran, \ \forall k = 1,\ldots,K,
\end{align}
where $\mathbf{R}_k$ is a rotation matrix whose first column is aligned with the radial direction from the \ac{BS} to user $k$ and the second column is tangential. The major-axis of the uncertainty set $\xi_{\textrm{major}}$ is specified directly in meters, while the minor-axis is set as $\xi_{\textrm{minor}} = \beta \xi_{\textrm{major}}$ with $\beta=0.5$. Moreover, the localization error $\mathbf e_k$ is drawn uniformly from the corresponding ellipsoidal location uncertainty set, and we plot the worst-user worst-case rate using Monte Carlo simulations with $50$ user locations.

Fig.~\ref{fig:power_sweep_worst_user_rate} shows the worst-user worst-case rate versus the transmit power budget. As expected, increasing $P_{\textrm{BS}}$ improves the achieved rates, as the system transitions from a noise-limited to a progressively more interference-limited regime. The robust design consistently outperforms the non-robust baseline across the entire power range, highlighting its ability to account for location uncertainty in the beam-focusing design. The performance gap becomes more pronounced at high transmit power, where the impact of mismatch between the estimated and true user locations leads to increased residual interference in the non-robust scheme.

Fig.~\ref{fig:alpha_sweep_worst_user_rate} plots the worst-user worst-case rate versus the uncertainty level controlled by $\xi_{\textrm{major}}$, with $P_{\textrm{BS}}=35$~dBW. As expected, increasing $\xi_{\textrm{major}}$ degrades the achieved rates due to the growing mismatch between the estimated and true user locations. The proposed robust design also consistently outperforms the non-robust baseline, highlighting its ability to account for location uncertainty in the beam focusing. Moreover, the performance gap widens as $\xi_{\textrm{major}}$ increases, since the non-robust scheme becomes increasingly sensitive to channel mismatch, whereas the robust design maintains better reliability under more challenging localization conditions.

\begin{figure}[t]
    \centering
    \begin{tikzpicture}
    \begin{axis}[
        width=8.5cm, height=6.5cm,
        xmin=0, xmax=30,
        xtick={0, 5, 10, 15, 20, 25, 30},
        ymin=0, ymax=5,
        ytick={0, 1, 2, 3, 4, 5},
        grid=major,
        xlabel={$P_{\textrm{BS}}$ [dBW]},
        ylabel={Worst-user worst-case rate [bits/s/Hz]},
        legend style={
            at={(0.03,0.97)},
            anchor=north west,
            font=\scriptsize,
            inner sep=2pt,
            fill opacity=0.8,
            draw opacity=1,
            text opacity=1
        },
        legend cell align=left,
        ticklabel style={font=\footnotesize},
        x label style={font=\footnotesize},
        y label style={font=\footnotesize},
    ]

    \addplot[blue, thick, mark=o]
    table[x index=0, y index=1, col sep=space] {figures/SNR_sweep_averaged.dat};
    \addlegendentry{Robust \ac{SDR}}

    \addplot[red, thick, dashed, mark=square*, mark options={solid}]
    table[x index=0, y index=2, col sep=space] {figures/SNR_sweep_averaged.dat};
    \addlegendentry{Non-robust \ac{SDR}}

    \end{axis}
    \end{tikzpicture}
    \caption{Worst-user worst-case rate versus the transmit power budget $P_{\textrm{BS}}$ with $\xi_{\textrm{major}} = 1$~m.}
    \label{fig:power_sweep_worst_user_rate} \vspace{-2mm}
\end{figure}

\begin{figure}[t]
    \centering
    \begin{tikzpicture}
    \begin{axis}[
        width=8.5cm, height=6.5cm,
        xmin=0, xmax=1,
        xtick={0,0.2,0.4,0.6, 0.8, 1},
        ymin=3.5, ymax=10,
        ytick={3,4,5,6,7,8, 9, 10},
        grid=major,
        xlabel={$\xi_{\textrm{major}}$ [m]},
        ylabel={Worst-user worst-case rate [bits/s/Hz]},
        legend style={
            at={(0.97,0.97)},
            anchor=north east,
            font=\scriptsize,
            inner sep=2pt,
            fill opacity=0.8,
            draw opacity=1,
            text opacity=1
        },
        legend cell align=left,
        ticklabel style={font=\footnotesize},
        x label style={font=\footnotesize},
        y label style={font=\footnotesize},
    ]

    \addplot[blue, thick, mark=o]
    table[x index=0, y index=1, col sep=space] {figures/major_sweep_averaged.dat};
    \addlegendentry{Robust \ac{SDR}}

    \addplot[red, thick, dashed, mark=square*, mark options={solid}]
    table[x index=0, y index=2, col sep=space] {figures/major_sweep_averaged.dat};
    \addlegendentry{Non-robust \ac{SDR}}

    \end{axis}
    \end{tikzpicture}
    \caption{Worst-user worst-case rate versus location uncertainty level with $P_{\textrm{BS}}=35$~dBW.}
    \label{fig:alpha_sweep_worst_user_rate} \vspace{-2mm}
\end{figure}

\section{Conclusion}\label{Conclusion}

In this work, we considered a \ac{DL} \ac{XL}-\ac{MIMO} system in the near-field regime with user location uncertainty. In such systems, even small localization errors can significantly degrade performance due to the strong dependence of near-field channels on both distance and angle. Motivated by this, we developed a robust max-min \ac{SINR} beam-focusing framework by linking localization errors to channel uncertainty via a first-order approximation, enabling a tractable worst-case formulation based on \ac{SDR}. Numerical results showed that the proposed design consistently improves the worst-user rate over non-robust methods under different total transmit power levels and localization error ranges. Future work will extend this framework by considering multipath propagation and more realistic location uncertainty models that better capture practical system conditions, while also developing lower-complexity solution methods to reduce the computational burden associated with the semidefinite programming feasibility problem.

\addcontentsline{toc}{chapter}{References}
\bibliographystyle{IEEEtran}
\bibliography{refs_abbr,refs}

\end{document}